\documentclass[a4paper,twoside,twocolumn,11pt]{article}

\usepackage{cmap}
\usepackage[utf8]{inputenc}
\usepackage[T2A,T1]{fontenc}
\usepackage[russian,english]{babel}

\usepackage{amsfonts,amsmath,amssymb,amsthm}
\usepackage{graphicx}

\usepackage[hidelinks]{hyperref}
\usepackage{caption}
\usepackage{fancyhdr}
\captionsetup{justification=centering}

\usepackage[margin=18mm,top=28mm]{geometry}

\usepackage[ruled,lined,vlined,linesnumbered]{algorithm2e}

\pagestyle{empty}
\pagestyle{fancy}
\fancyhead{} 
\fancyfoot{} 
\fancyhead[RO,LE]{\thepage}
\fancyhead[LO]{Interactive Duplicate Search in Software Documentation}
\fancyhead[RE]{D.V.~Luciv, D.V.~Koznov, A.A.~Shelikhovskii et al.}

\renewenvironment{abstract}
 {\small
  \begin{center}
  \bfseries \abstractname\vspace{-.5em}\vspace{0pt}
  \end{center}
  \list{}{%
    \setlength{\leftmargin}{10mm}
    \setlength{\rightmargin}{\leftmargin}%
  }%
  \item\relax}
 {\endlist}

\usepackage{fancyvrb}
\newcounter{rlstno}

\newcommand{\en}{\selectlanguage{english}}

\en
\newtheorem{theorem}{\iflanguage{russian}{Теорема}{Theorem}}
\newtheorem{definition}{\iflanguage{russian}{Определение}{Definition}}
\newtheorem{remark}{\iflanguage{russian}{Замечание}{Note}}

\newtheorem{lemma}{\iflanguage{russian}{Лемма}{Lemma}}
\tolerance=1000


\author{
D.V.~Luciv, D.V.~Koznov, A.A.~Shelikhovskii, K.Yu.~Romanovsky,\\
G.A.~Chernishev, A.N.~Terekhov, D.A.~Grigoriev, A.N.~Smirnova,\\ D.~V.~Borovkov, A.~I.~Vasenina\\
~Saint Petersburg State University\\
E-mail:~\texttt{\{d.lutsiv, d.koznov\}@spbu.ru; tshel231@gmail.com;}\\
\texttt{\{k.romanovsky, g.chernyshev, a.terekhov, d.a.grigoriev\}@spbu.ru;}\\ \texttt{anna.en.smirnova@gmail.com; danila\_yumsh@mail.ru;}\\
\texttt{saibrog@yandex.ru}
}
\title{Interactive Duplicate Search in Software Documentation}

\newcommand\blfootnote[1]{%
\begingroup
\renewcommand\thefootnote{}\footnote{\hspace{-5.9mm}#1}%
\addtocounter{footnote}{-1}%
\endgroup
}

\date{}
\pagenumbering{gobble}

\begin{document}

\twocolumn[{%
\maketitle
\thispagestyle{empty}
\begin{abstract}
Various software features such as classes, methods, requirements, and tests often have similar functionality. This can lead to emergence of duplicates in their descriptive documentation. Uncontrolled duplicates created via copy/paste hinder the process of documentation maintenance. Therefore, the task of duplicate detection in software documentation is of importance. Solving it makes planned reuse possible, as well as creating and using templates for unification and automatic generation of documentation. In this paper, we present an interactive process for duplicate detection that involves the user in order to conduct meaningful search. It includes a new formal definition of a near duplicate, a pattern-based, and the proof of its completeness. Moreover, we demonstrate the results of experimenting on a collection of documents of several industrial projects.
\end{abstract}
}]

{~}

\vspace{-48pt}
\blfootnote{This work is partially supported by RFBR grant \mbox{16-01-00304.}}

\section{Introduction}
Software documentation quality issues have been studied since the 70's~\cite{Brooks1965en}, and continue to be addressed nowadays~\cite{Parnas2011}. In addition to that, both documentation and software are becoming more and more complex and require more resources for their development and maintenance.

 Copy/paste is commonly used in creating and modifying documents: a fragment of text is copied multiple times, and then edited and expanded to suit the subject.  The  use of this technique is justified by the fact that many software features described in documentation reuse functionality~--- this is true for requirements, user interface elements, tests, source code, etc. However, without the aid of specialized tools, repeatedly copied fragments create additional difficulties during maintenance because they require extensive synchronisation of changes in corresponding software features. Software reuse is a considerably more advanced research area than software documentation reuse. There is a  multitude of studies on software reuse and a part of them has been adopted by the industry (see surveys~\cite{Bassett1996,Jarzabek2006,ReuseReview2018}). However, the problem of software documentation reuse largely remains a subject of academic research only~\cite{Horie2010,Oumaziz2017,Koznov2008,Romanovsky2008,Jarzabek2017}.
We should also mention the problem of documentation unification~--- if there is a large volume of similar information, it is only reasonable to have it presented in a consistent manner. Therefore, duplicate detection and duplicate examination are important for setting up documentation reuse and unification.

 Duplicates in software documentation have been extensively studied during the last decade~\cite{Horie2010,Juergens2010,Nosal2014,Nosal2016,Wingkvist2010,Koznov2015,Luciv2018en.2,Koznov2017}. At the same time, there are no specialized tools for duplicate detection. Generally, various text search tools are used for this purpose. However, these tools cannot be employed in detection of near duplicates, i.e. text fragments with a substantial common part and certain variations. For this reason, we have created Duplicate Finder~\cite{DRTru,Luciv2018.1}. In~\cite{Luciv2018en.2}, we have presented a near duplicate search algorithm which considers near duplicates as a combination of exact duplicates. However, the output quality of this algorithm was low because it does not assess the meaningfulness of found duplicates.

In this paper we present an an approach for interactive detection of near duplicates. We involve the user in order to provide meaningfulness of the search process. In short, this process is organized as follows. At first, we automatically create a map of exact duplicates of the document using Clone Miner~\cite{basit2009}. 

The next step relies on the following assumption: an accumulation of exact duplicates in a certain place of the document points to a possible emergence of a near duplicate. At this step, the user moves to the most duplicate-populated document section using this map as a clue. After that, they select the text fragment that contains the most frequently used exact duplicates. Then, the user transforms this fragment into a full description of a certain software feature by extending its bounds. This way, the user ensures the meaningfulness of the fragment. Next, this description (textual string) is used as a pattern for further search. The user can also edit the search results by filtering out false positives and ensuring the meaningfulness of found fragments by expanding or narrowing their bounds in the document. 

 This article is organized as follows. In Section 2 we present an overview of existing studies that are similar to the current work, and in Section 3 we describe the approaches, methods, and technologies we have used in our study. Section 4 contains the description of the interactive near duplicate search process. In Section 5 we propose a new formal definition of a near duplicate, and in Section 6 we present the pattern-based near duplicate search algorithm that forms the basis of the proposed process. In addition, we formulate the criterion of completeness for the algorithm and prove that the proposed algorithm is complete. In Section 7 we provide complexity estimates for the algorithm, and in Section 8 we demonstrate the results of an experimental evaluation. 

\section{Related Work}
Duplicates in software documentation have been extensively studied during the last decade. Horie et al.~\cite{Horie2010} consider duplicates in API documentation of Java projects, extending JavaDoc with reuse tools. This approach is expanded with consideration of near duplicates by Nos{\'a}l' and Porub{\"a}n in~\cite{Nosal2014}. Similarly to approaches presented in references~\cite{Romanovsky2008, Koznov2008, Jarzabek2017}, the authors of this paper employ parametrization for defining variative parts of duplicates. However, this study does not consider the task of near duplicate search itself, and their definition of a near duplicate is informal. 

In their further research~\cite{Nosal2016}, they examine exact duplicates in embedded documentation of several open-source projects, but do not consider near duplicates. 

Wingkvist et al.~\cite{Wingkvist2010} use duplicates to evaluate the quality of documentation, with no consideration of near duplicates. 

The paper~\cite{Juergens2010} by Juergens et al. is an examination of duplicates in requirement specifications: the authors have analyzed 28 industrial documents, manually filtering and classifying found duplicates. The meaning of these duplicates was discussed (with emphasis on duplicates that corresponded to code clones). They also have  studied the influence of redundancy on the speed of reading and understanding texts. This work does not consider other types of software documentation, as well as near duplicates, although the authors do mention their existence. 

Ouzmazis et al.~\cite{Oumaziz2017} analyze API documentation of several well-known open source projects, classify detected duplicates, and consider the problem of documentation reuse. They do not consider near duplicates, however, they note that those duplicates occur quite often and are important in practice. 

Rago et al.~\cite{Rago2016} present near duplicate search in textual use case descriptions with the use of natural language processing methods. However, they consider a highly specific type of requirement specifications, which is rarely used in practice. Moreover, it is unclear how to apply this method to other types of documentation. 

Concluding our overview, we should note that most existing approaches except~\cite{Rago2016} use token-based tools of code clone analysis. This fact seriously complicates near duplicate detection. However, some authors acknowledge the existence and importance of near duplicates in documentation redundancy analysis and documentation reuse~\cite{Oumaziz2017,Juergens2010,Nosal2014}.

The approach presented in this work is largely based on pattern matching. This problem is well-studied and has been solved in multitude of ways for different contexts. Let us provide a short overview of this problem in the context of text search. 

The algorithm proposed by Ukkonen~\cite{Ukkonen1985} makes efficient matching approximate occurrences of pattern in text possible, but it requires a costly preprocessing of the pattern. 

Broder~\cite{Broder1997} describes a method for matching approximate occurrences of pattern in text using information retrieval methods; we should note that this approach also requires expensive preprocessing of input data (both document and pattern). 

Algorithms presented in~\cite{Ukkonen1985,Wu1992,Landau1988,Myers1999} are efficient but quite sophisticated, which complicates their use and modification, as well as proving their formal properties.  

Ukkonen's algorithm~\cite{Ukkonen1985} is suited for working with an immutable pattern, and Broder's approach~\cite{Broder1997} operates on an immutable document. Both of these situations are irrelevant to our task. Studies by Landau and Vishkin~\cite{Landau1988}, Myers~\cite{Myers1999} describe algorithms for detecting text fragments for which Levenshtein distance~\cite{Levenshtein1965en} does not exceed a pre-defined threshold. We should emphasize the high computational complexity of Levenshtein distance calculation, which, as shown by our experiments, makes this approach unsuitable for duplicate detection. A more detailed review of approximate pattern matching in text can be found in~\cite{Smyth2006en}. Using ideas proposed in this research area, we have created our own pattern matching algorithm while adhering to the following requirements: 
\begin{itemize}
	\item[(i)] the algorithm needs to perform near duplicate detection in accordance with our definition of near duplicates;
	\item[(ii)] we wanted to formally prove a number of properties of this algorithm;
	\item[(iii)] the run time has to be adequate because the algorithm is run in an interactive mode;
	\item[(iv)] the algorithm needs to yield as few false positives as possible.
\end{itemize}   

\section{Background} 
\subsection{Edit distance}
We use \emph{edit distance}~\cite{Levenshtein1965en} to determine the degree of similarity of two text fragments (strings of text). This distance is essentially the number of string editing operations required to convert one string into another: the less operations, the more similar the strings. Different definitions of edit distance differ by their admissible operations. In our work, we use \emph{longest common subsequence distance}~\cite{Bergroth2000,Leskovec2014} that uses only insertion and deletion of a symbol due to its suitability for near duplicate model described below, and, consequently, the convenience of further proofs. The authors of~\cite{Gusfield1997} prove that this type of edit distance has metric properties. Further on, we will denote the longest common subsequence distance between two strings $s_1$ and $s_2$ as $d(s_1,s_2)$.

Computing edit distance is a resource-consuming task. The complexity of the algorithm we have selected is estimated~\cite{Ratcliff1988} as $\mathcal {O}(|s_1 |*|s_2 |)$ in the average case. Furthermore, the authors of reference~\cite{Abboud2015} show that it is impossible to design an algorithm that can provide better complexity estimates for the worst case. In this work, we use the difflib library~\cite{pyDiffLib}, which is included in the standard Python package (performance-critical parts of which are implemented in C).

\subsection{Detecting exact duplicates with Clone Miner}
 We have employed exact duplicate detection to create a duplicate map, using which the user could select a pattern for matching. We have selected Clone Miner~\cite{basit2009} for this task, which is a software code clone detection tool. Clone Miner is a token-based tool, and it does not employ an abstract syntax tree. A token is a stand-alone word (sequence of symbols) in a document, separated from adjacent words by such delimiters as ``.'', ``,'', `` '', ``('', ``)'', etc. For example, the fragment ``FM registers'' consists of two tokens. Clone Miner considers text as an ordered collection of tokens and detects duplicate fragments (clones) using algorithms based on suffix trees~\cite{Abouelhoda2004}. We have chosen Clone Miner because it is easily integrated with other tools through command line interface and supports the Russian language.

\section{The Process}
The general purpose of our process is to ensure meaningfulness of duplicate detection via user interaction. A diagram which describes the workflow of the process is presented in Fig.~\ref{fig:methodics}.
\begin{figure}[h]
\includegraphics[width=85mm,page=2]{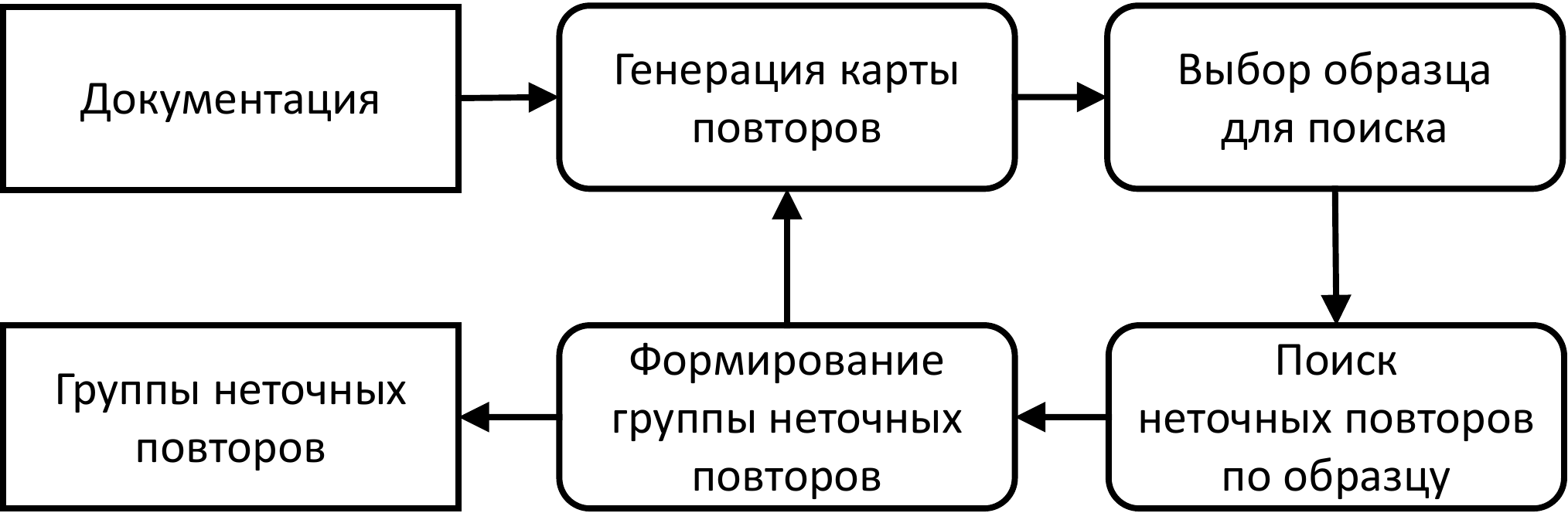}
\caption{Process overview}\label{fig:methodics}
\end{figure}
Let us describe the process in detail.

\paragraph{Generating a duplicate map.}
Using Clone Miner~\cite{basit2009}, all exact duplicate groups in the document are detected. Every token (word) $t$ is assigned a color from an RGB interval from white to red: $color(t)=(h(t)/T_m)*R+(1-(h(t))/ T_m)*W,$ where $R = [1, 0, 0]$, $W=[1, 1, 1]$, $h(t)$ is the exact duplicate group that has the maximum cardinality and contains $t$ (further on called token temperature), and $T_m$ is the maximum cardinality of an exact duplicate group in the document (maximum token temperature).\footnote{We only consider groups that consist of fragments longer than four tokens, because, according to our experiments~\cite{Koznov2015}, this particular constraint filters out many false positives.} The closer a token's color to red, the ``warmer'' it is. This metaphorical representation is called a heat map~\cite{Spakov2007}, an example of which can be seen in Fig.~\ref{fig:heatmap} 
\begin{figure}[h]
\centering\includegraphics[scale=0.60]{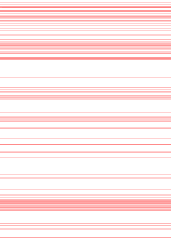}
\caption{Duplicate map}\label{fig:heatmap}
\end{figure}

The generated heat map provides an overview of the potential near duplicate occurences. The areas most likely to contain near duplicates are represented by red areas of different hue. Tokens that occur in the reddest areas are repeated the same or roughly the same number of times. Therefore, the probability that these tokens would form a meaningful near duplicate is quite high. This way, one may hope to obtain meaningful near duplicates that appear a significant number of times in the document.

\paragraph{Selecting a search pattern.} 
The user moves to the reddest (the ``warmest'') area on the heat map (Fig.~\ref{fig:heatmap}), zooms in on it, and selects a fragment (pattern) for further search (see Fig.~\ref{fig:reusemap}). During this process, the user does not only consider the color of the selected fragment, but aims to select a fragment that describes a software feature in full. To achieve this, the user can either include a white-colored text fragment in the pattern, or not include a red-colored one. Consider an example. Following the information from Fig.~\ref{fig:reusemap}, we select this fragment:

{\small \begin{Verbatim}[commandchars=\\\{\}]
To alter the owner, you must also be a \\
direct or indirect member of the new \\
owning role, and that role must have \\
CREATE privilege on the table's schema.\\
(These restrictions enforce that alte- \\
ring the owner doesn't do anything you \\
couldn't do by dropping and recreating \\
the table. However, a superuser can \\
alter ownership of any table anyway.)
\end{Verbatim}
}

 This fragment describes an integral software feature concerning the administration of the PostgreSQL DMBS: to alter the owner of a database table, you must also have specific rights or be an administrator.

\begin{figure}[h]
\centering\includegraphics[width=8.5cm]{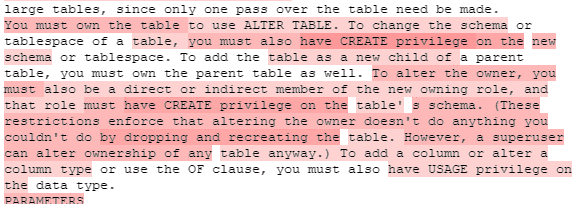}
\caption{Selecting a search pattern (PostgreSQL documentation)}\label{fig:reusemap}
\end{figure}

\paragraph{Near duplicate search.} The user selects a similarity measure for the highlighted fragment, which is a number from $1 / \sqrt{3}$ to~1, and launches the pattern matching algorithm\footnote{The $1 / \sqrt{3}\approx 0.577$ value was selected for the convenience of the following proofs; following our experiments, we have concluded that if the similarity measure is less than $1/2$, then, for smaller patterns (up to 15-20 tokens), the algorithm produces many non-meaningful matches; the lower bound we have selected is insignificantly larger than $1/2$.}.

\paragraph{Forming a near duplicate group.}
Having received the algorithm's output, the user modifies it. During the process the user deletes elements (near duplicate occurences) that only resemble the pattern syntax-wise but not meaning-wise. Furthermore, for each occurence the user can modify the bounds of the fragments to ensure the meaningfulness of each.

\section{Defining a near duplicate group}
In this section we generalize the definition of a near duplicate group that we have proposed earlier in~\cite{Luciv2018en.2,Luciv2017}. In contrast with the previous definition, here we use a parameter instead of a constant to define the similarity measure, and we allow to place extension points at the ends of duplicates. We will consider a document $D$ as a finite sequence of symbols, denoting its length as \(\mathrm{length}(D)\).

\begin{definition}
A \textbf{text fragment} is an occurrence of a certain symbol string in document $D$.\end{definition}
Therefore, for every text fragment $g$ of document $D$ there is an integer interval $[b, e]$, where $b$ is the position of the first symbol of the fragment, and $e$ is the position of the last. By $g \in D$, we denote a text fragment $g$ of document $D$. Next, let $[g]$ be the function that maps a text fragment $g$ to its interval, and let \(\mathrm{str}(g)\) be the function that maps a text fragment $g$ to its textual content. By $\mathrm{b}(g)$ and $\mathrm{e}(g)$ we denote the positions of the beginning and the end of $g$. Next, \(|g|\) is a function that takes a text fragment $g$ and returns its length as \(|g| = 1 + \mathrm{e}(g)-\mathrm{b}(g)\). Finally, we introduce a two-place predicate $\mathrm{Before}(g_1, g_2) $, which is true if and only if $\mathrm{e}(g_1) < \mathrm{b}(g_2)$.

\begin{definition}\label{def:ndgk}\textbf{Near duplicate group.}
Consider a collection of text fragments \(g_{1},\ldots,g_{M}\) of document \(D\). We will call this collection a near duplicate group with the similarity measure $k \in (1/\sqrt{3}, 1]$ (or simply a near duplicate group) if the following conditions are satisfied.

\begin{enumerate}
\item
$\forall i \in \{1, \ldots, M-1\} \;$ holds $\mathrm{Before} (g_{i}, g_{i + 1})$\;
\item
There exists a ordered collection of strings $(I_{1}, \ldots, I_{N})$ such as there is an occurrence of this collection in every text fragment, i.e. $\forall j \in \left\{ 1,\ldots,M \right\}\forall i \in \{ 1, \ldots, N \}\;
I_i\subset \mathrm{ str}(g_j)$ and $\forall i \in \{ 1, \ldots, N-1 \}\;$ holds $\mathrm{Before}(I_{i}^{j}, I_{i+1}^{j})$, where $I_{i}^{j}$ is an occurrence of $I_{i}$ in $g_{j}$, and the following condition is satisfied:
\begin{equation*}
\forall j \in \{ 1,\ldots, M \} \; \frac{\sum_{i = 1}^{N} |I_i|}{| g_j|} \geq k
\label{eq:ndg2}\end{equation*}
\end{enumerate}
\label{def:ndg2}
\end{definition}

We define the \emph{archetype} of a given group as a collection of strings \((I_{1},\ldots,I_{N})\). It is easy to show that the definition proposed above generalizes the definition given in ~\cite{Luciv2018en.2,Luciv2017}. If $G$ is a near duplicate group, then by $|G|$ denote the number of elements of this group.

\begin{definition}
Consider a text fragment \(p\) of document \(D\) (\(p \in \text{D}\)) and \(g \in D\). We say that \(g\) is a near duplicate of \(p\) with similarity \(k\), if \(g\) and \(p\) form a near duplicate group with similarity \(k\) defined according to Def.~\ref{def:ndg2}. \label{def:ndgp} \end{definition}

\section{Pattern based near duplicate search algorithm}
\renewcommand{\algorithmcfname}{\iflanguage{russian}{Алгоритм}{Algorithm}}

\SetInd{0.125em}{0.75em}
\begin{algorithm}[h]
\caption{\iflanguage{russian}{Поиск неточных повторов по образцу}{Pattern based near duplicate search algorithm}}
\LinesNumbered
\DontPrintSemicolon
\SetKwInput{KwIn}{\iflanguage{russian}{Входные данные}{Input data}}
\SetKwInput{KwOut}{\iflanguage{russian}{Результат}{Result}}
\KwIn{$D$ --- document,\\ $p$ --- pattern, $k$ --- similarity measure}
\KwOut{$R$}

\tcp{Phase 1 (scanning)}
 
$W_1 \leftarrow \emptyset$

\For{$\forall w_1: w_1 \in D \wedge |w_1| = L_w$}{
	\If{${\rm d}_{di}(w_1, p) \le k_{di}$}{
		add $w_1$ to $W_1$
	}
}

\tcp{Phase 2 (``shrinking'')}

$W_2 \leftarrow \emptyset$

\For{$w \in W_1$}{
  $w'_2 \leftarrow w$
  
  \For{$l \in I$}{
  	\For {$\forall w_2: w_2 \subseteq w \wedge |w_2| = l$}{
  		\If{${\rm Compare}(w_2, w'_2, p)$}{
  			$w'_2 \leftarrow w_2$
  		}
  		
  		add $w'_2$ to $W_2$
  	}
  }
}

\tcp{Phase 3 (filtering)}

$W_3 \leftarrow {\rm Unique}(W_2)$

\For{$w_3 \in W_3$}{
	\If{$\exists w'_3 \in W_3: w_3 \subset w'_3$}{
		remove $w_3$ from  $W_3$
	}
}

$R \leftarrow W_3$
\end{algorithm}\label{alg:main}

\subsection{Algorithm description}
 The algorithm consists of three phases. At \textbf{phase 1 (scanning)}, document \(D\) is scanned by a sliding window \(w\) of size \(L_{w} = \frac{\left| p \right|}{k}\) with a one symbol step\footnote{Here and further we do not round the lengths of the intervals to integers to save up space. Nevertheless, all proofs can be performed with rounded values as well.}. The text fragment that corresponds to the current window position is compared to pattern \(p\) using edit distance, and if they are close, i.e. \(d\left( p,w \right)\ \leq k_{\mathrm{di}}\), then this fragment is saved in the set \(W_{1}\). The threshold value $k_{\text{di}}$ is defined as follows:

\begin{equation}
k_{\text{di}} = \ \left| p \right|\left( \frac{1}{k} + 1 \right)\left( 1 - k^{2} \right)
\label{eq:kdi}\end{equation}

This choice will be explained below.

At \textbf{phase 2 (``shrinking'')}, we search for the largest text fragment that is closest to pattern \(p\) in every element of \(W_{1}\). Essentially, during this phase lengths of elements of \(W_{1}\) decrease, i.e. text fragments are ``shrunk''. This is reasonable since the window (and consequently, all elements of \(W_{1}\)) is of maximum possible size of a near duplicate of \(p\) (see lem.~\ref{lm:k1k}). During ``shrinking'' for every \(w_2 \in W_1\), all of its internal fragments are iterated over, starting with fragments of \(|p|*k\) length up to fragments of \(\frac{|p|}{k}\) length. The one that is closest to the pattern in terms of edit distance is selected. If there are several such fragments, the longest one should be taken. This phase results in the set \(W_{2}\).

At the \textbf{phase 3 (filtering)}, duplicate elements in \(W_{2}\) are eliminated.  They emerge at the previous phase because \(W_{1}\) can contain text fragments that differ by a window shift of several symbols. Furthermore, elements that are fully contained in other elements of \(W_{2}\) are filtered out. This phase results in the set \(W_{3}\) which is the output of the algorithm, i.e. the set \(R\). 

Let us describe the auxiliary functions used in algorithm 1. The \(\text{Compare}\) function is used during phase 2 to identify the text fragment which is closer to the pattern \(p\) in terms of edit distance. If the distance from both fragments to the pattern is the same, the longest fragment is selected:

\begin{equation*}
\begin{gathered}
\mathrm{Compare}(w_1,w_2,p)
= \\
= \begin{cases}
true & d(w_1,p) < d(w_2,p)\\
false & d(w_1,p) > d(w_2,p)\\
|w_1| > |w_2| & d(w_1,p) = d(w_2,p)
\end{cases}
\end{gathered}
\end{equation*}
The \(\text{Unique}\) function receives a collection of text fragments, iterates over it and discards duplicate fragments.

\subsection{Algorithm Completeness}
\textbf{The criterion of completeness} for our pattern based near duplicate search algorithm is defined as follows. The algorithm is complete if for arbitrary \(D\), $p \in D$, output of the algorithm \(R\), and for any near duplicate group \(G\) of fragment \(p\) with similarity \(k\) (see def.~\ref{def:ndgk}), the following condition holds true: 

\begin{equation}
\forall g \in G\ \exists w \in R: \;
\left| g \cap w \right | \geq O_{\min}(k)
\label{eq:criteria},\end{equation}
where \(O_{\min}(k)=\frac{\left| p \right|}{2}\left( 3k - \frac{1}{k} \right)\). This criterion can be explained as follows: for any fragment of document D that is a near duplicate of pattern \(p\), the  set \(R\) will contain a text fragment that significantly intersects with this near duplicate, allowing the user to easily recognise this duplicate in the output. The ratio of the intersecting portion to the whole pattern is bounded from below by the \(O_{\min}\ (k)\) function. \(O_{\min}(\frac{1}{\sqrt{3}}) = 0,\)  and for larger values of \(k\) \(O_{\min}\left( k \right) > \ 0\). This is true since the function increases with increasing \(k\)~--- its derivative is \(\frac{\left| p \right|}{2}\left( 3 + \frac{1}{k^{2}} \right)\)  and it is obviously positive for all \(\frac{1}{\sqrt{3}} < k \leq 1\). In practice, the best results are achieved for \(k \geq 0.77\): for these values \(O_{\min}\left( k \right) > \frac{\left| p \right|}{2}\), i.e. all elements of \(R\) intersect all near duplicates at least by half of the pattern's length. Note that the lower estimate $O_{\min}(k)$ is pessimistic: the experimental results demonstrate a larger overlap of the output and the near duplicates contained in the document. Let us continue on to the completeness of the proposed algorithm, proving several auxiliary propositions first.

\begin{lemma}
Let \(G\) be a near duplicate group of fragment \(p\) with similarity \(k\). Then \(\forall\ g_{1},\ g_{2} \in G\)
\(k \leq \frac{\left| g_{1} \right|}{\left| g_{2} \right|} \leq \frac{1}{k}\) holds true.
\label{lm:k1k}\end{lemma}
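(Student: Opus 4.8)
The plan is to read everything off the archetype \((I_{1},\ldots,I_{N})\) supplied by Definition~\ref{def:ndg2}. Put \(S=\sum_{i=1}^{N}|I_i|\); this number depends only on the group \(G\), not on any particular member. Since the two inequalities to be proved, \(k\le |g_1|/|g_2|\) and \(|g_1|/|g_2|\le 1/k\), are interchanged by swapping \(g_1\) and \(g_2\), it suffices to prove \(k|g_2|\le|g_1|\) for an arbitrary ordered pair \(g_1,g_2\in G\) and then apply the conclusion with the roles reversed.

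First I would establish the upper bound \(S\le|g|\) for every \(g\in G\). By Definition~\ref{def:ndg2}, the archetype occurs inside each \(g_j\) as \(I_1^{j},\ldots,I_N^{j}\) with \(\mathrm{Before}(I_i^{j},I_{i+1}^{j})\) for all \(i\), so these occurrences are pairwise disjoint subintervals of \([g_j]\) arranged in order, and \(|I_i^{j}|=|I_i|\). The total length of disjoint subintervals of a fragment of length \(|g_j|\) cannot exceed \(|g_j|\), hence \(S=\sum_{i}|I_i^{j}|\le|g_j|\). This is the only spot where the ordering of the \(I_i\) matters, and it is the step to handle with a little care: it is precisely the \(\mathrm{Before}\) predicate on consecutive occurrences \(I_i^{j}\) that rules out overlaps and keeps the sum of lengths bounded by the total length.

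Next I would use the similarity inequality of Definition~\ref{def:ndg2} in the other direction: for every \(g\in G\) we have \(S/|g|\ge k\), i.e. \(k|g|\le S\). Combining the two bounds for the two fragments yields the sandwich \(k|g_2|\le S\le|g_1|\), which is exactly \(|g_1|/|g_2|\ge k\); interchanging \(g_1\) and \(g_2\) gives \(|g_2|/|g_1|\ge k\), i.e. \(|g_1|/|g_2|\le 1/k\). I do not expect a real obstacle here: once \(S\le|g|\) is justified from the non-overlapping archetype occurrences, the lemma is the one-line estimate \(k|g_2|\le S\le|g_1|\).
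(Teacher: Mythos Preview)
Your proposal is correct and follows essentially the same route as the paper: write \(S=|A|=\sum_i|I_i|\), use the similarity condition to get \(k|g|\le S\), use the embedded archetype to get \(S\le|g|\), and sandwich. If anything, you are a little more careful than the paper in justifying \(S\le|g|\) via the \(\mathrm{Before}\) predicate guaranteeing disjoint occurrences.
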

\begin{proof}
Suppose \(A\) is the archetype of group \(G\). Then 
\(k|g_{1}| \leq |A|\) and \(k | g_2 | \leq |A|\).
Because
\(A \subset \mathrm{str}(g_1)\ \text{and}\ A \subset \mathrm{str}(g_2)\),
we have:
\(k|g_1| \leq |A| \leq |g_1| \ \text{and}\ k|g_2| \leq |A| \leq |g_2|\).
Therefore, \(k|g_1| \leq |g_2| \ \text{and}\ k|g_2| \leq |g_1|\).
Dividing these inequalities by \(|g_2|\) and \(k|g_2|\) respectively, we get the required result.
\end{proof}

\begin{lemma}
Let \(G\) be a near duplicate group of fragment \(p\) with similarity \(k\).
Then \(\forall\ g\ \in G\) the following holds true:
\(d(g,\ p) \leq \left( 1 - k^{2} \right)\left| p \right|\).
\label{lm:p1k2}\end{lemma}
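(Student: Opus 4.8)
\emph{Approach.} The plan is to turn the archetype that Definition~\ref{def:ndg2} attaches to any near duplicate group into a long common subsequence of $g$ and $p$, and then read off a bound on $d(g,p)$ from the identity $d(s_1,s_2)=|s_1|+|s_2|-2c$, where $c$ is the length of a longest common subsequence of $s_1$ and $s_2$ (this is the form the longest common subsequence distance takes, since only insertions and deletions are allowed). Everything afterwards is bookkeeping with the ratio condition of Definition~\ref{def:ndg2} and with Lemma~\ref{lm:k1k}.

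\emph{Key steps.} First, because $g$ lies in a near duplicate group of $p$, Definition~\ref{def:ndg2} supplies an archetype $A=(I_1,\ldots,I_N)$ with occurrences inside $\mathrm{str}(g)$ and inside $\mathrm{str}(p)$ that are pairwise disjoint and ordered by $\mathrm{Before}$; reading those occurrences from left to right exhibits the string $I_1I_2\cdots I_N$ as a common subsequence of $\mathrm{str}(g)$ and $\mathrm{str}(p)$, so $c\ge |A|:=\sum_{i=1}^N|I_i|$ and therefore $d(g,p)\le |g|+|p|-2|A|$. Second, I would substitute the ratio inequalities of condition~2 of Definition~\ref{def:ndg2}, applied once to $p$ and once to $g$: $|A|\ge k|p|$ and $|A|\ge k|g|$, equivalently $|p|-|A|\le (1-k)|p|$ and $|g|-|A|\le (1-k)|g|$. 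Third, I would bring in Lemma~\ref{lm:k1k} for the pair $(g,p)$, which confines $|g|$ to the interval $[\,k|p|,\,|p|/k\,]$ (the upper bound also follows at once from $|A|\le |p|$ together with $|A|\ge k|g|$). Finally I would combine these estimates and simplify, with the factorisation $1-k^2=(1-k)(1+k)$ doing the work, to reach $d(g,p)\le(1-k^2)|p|$.

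\emph{Main obstacle.} The one step that needs care is the final combination: the inequality $d(g,p)\le |g|+|p|-2|A|$ must not be relaxed carelessly. Bounding $|g|-|A|$ and $|p|-|A|$ each by the crude $(1-k)\cdot(\text{length})$ and only then using $|g|\le|p|/k$ overshoots the target, so I would instead keep $d(g,p)\le |g|+|p|-2|A|$ in that form, insert $|A|\ge k\max(|g|,|p|)$, and treat the right-hand side as a function of $|g|$ on the admissible interval of Lemma~\ref{lm:k1k}, locating its maximum. As sanity checks I would look at the endpoint $k=1$ (where the bound must collapse to $d(g,p)\le 0$) and at $k\to 1/\sqrt3$, and confirm that the constant lands exactly where it is needed for phase~1: the threshold $k_{\mathrm{di}}$ in~(\ref{eq:kdi}) should decompose as this lemma's bound plus the slack $|p|/k-k|p|$ between the window length $L_w=|p|/k$ and the minimum near duplicate length $k|p|$.
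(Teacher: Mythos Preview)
Your setup is exactly the paper's: exhibit the archetype $(I_1,\ldots,I_N)$ as a common subsequence of $g$ and $p$, obtain $d(g,p)\le|g|+|p|-2|A|=(|p|-|A|)+(|g|-|A|)$, and then feed in $|A|\ge k|p|$, $|A|\ge k|g|$ together with Lemma~\ref{lm:k1k}. You also correctly flag that bounding $|g|-|A|\le(1-k)|g|$ and only afterwards replacing $|g|$ by $|p|/k$ overshoots the target. The trouble is that your proposed repair overshoots as well. If you actually carry out the optimisation you describe---maximise $|g|+|p|-2k\max(|g|,|p|)$ over $|g|\in[k|p|,|p|/k]$---the maximum sits at $|g|=|p|$ and equals $2(1-k)|p|$, which exceeds $(1-k^{2})|p|=(1-k)(1+k)|p|$ for every $k<1$. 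So your plan, executed honestly, does not reach the constant in the statement.

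This is not a slip you can patch: the stated bound is false. With $k=0.8$, let $p$ and $g$ each have length $10$, share an $8$-symbol archetype, and carry disjoint $2$-symbol tails (so the longest common subsequence has length exactly $8$); then $d(g,p)=10+10-2\cdot 8=4>3.6=(1-k^{2})|p|$, while the archetype ratio $8/10=0.8$ certifies similarity $k$ for both fragments. The paper's own proof arrives at $(1-k^{2})|p|$ only via the line ``According to Lemma~\ref{lm:k1k} we have $|g|\le k|p|$,'' and that is the wrong direction: Lemma~\ref{lm:k1k} yields $|g|\ge k|p|$ (and $|g|\le|p|/k$). So the difficulty you sensed in the ``final combination'' is genuine and lives in the statement itself; the best uniform bound this argument can deliver is $d(g,p)\le 2(1-k)|p|$, and your sanity check against $k_{\mathrm{di}}$ in~(\ref{eq:kdi}) would then have to be redone with that constant.
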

\begin{proof}
Because \(p\) and \(g\) belong to the same near duplicate group, they have the same archetype and can be presented in the following way:

\begin{equation*}
\begin{gathered}
p = v_{0}^{p}I_{1}v_{1}^{p}I_{2}\ldots v_{N - 1}^{p}I_{N}v_{N}^{p},\\
g = v_{0}^{g}I_{1}v_{1}^{g}I_{2}\ldots v_{N - 1}^{g}I_{N}v_{N}^{g},
\end{gathered}
\end{equation*}
where \(I_{1},I_{2}\ldots{,I}_{N}\) is the archetype of group \(G\),
\(v_{0}^{p},v_{1}^{p},\ldots,v_{N}^{p}\)  is the variative part of \(p\), and 
\(v_{0}^{g},v_{1}^{g},\ldots,v_{N}^{g}\) is the variative part of
\(g\).

Let us introduce the following notations:
\(v^{p} = v_{0}^{p}v_{1}^{p},\ldots,v_{N}^{p},\ v^{g} = v_{0}^{g}v_{1}^{g},\ldots,v_{N}^{g},\ A = I_{1}I_{2}\ldots I_{N}\).
Then according to~(\ref{eq:ndg2}), we have
\(|A|/|p| \ge k \Rightarrow |p| - |v^p| \geq |p|k \Rightarrow |p| - |p|k \geq |v^p| \Rightarrow
|p| (1 - k) \geq v^p\), and, likewise, \(|g|(1 - k) \geq v^g\). Moreover, \(g\) can be obtained from \(p\) by substituting \(v_i^p\) for \(v_i^g\), i.e.
\(d(g, p) \leq |v^g| + |v^p| \leq (1 - k)(|p| + |g|)\).
According to lemma~\ref{lm:k1k} we have \(|g| \leq k |p|\).
Then \(d(g,p) \leq (1 - k)(1 + k)|p| = (1 - k^{2})|p|\).
\end{proof}

\begin{lemma}
For any \(p \in D\), \(k \in \left(1/\sqrt{3}, 1\right]\), near duplicate group \(G\) of fragment \(p\) with similarity \(k\) (def.~\ref{def:ndgp}), the criterion of completeness~\ref{eq:criteria} is satisfied in respect to the results of phase 1.
\label{lm:ph1}\end{lemma}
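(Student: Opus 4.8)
The plan is to prove completeness of phase~1 by exhibiting, for an arbitrary near duplicate \(g\in G\), one explicit window \(w\in W_1\) that covers \(g\) entirely; both requirements — that \(w\) survives the phase~1 test and that \(|g\cap w|\ge O_{\min}(k)\) — then drop out of the length and distance bounds already established in Lemmas~\ref{lm:k1k} and~\ref{lm:p1k2}.

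First I would apply Lemma~\ref{lm:k1k} to the pair \(g,p\) (both of which lie in the same near duplicate group), obtaining \(k|p|\le |g|\le |p|/k = L_w\). Since \(|g|\le L_w\), the interval \([g]\) fits inside some window of length \(L_w\): slide a window of size \(L_w\) until it contains \([g]\), and near an end of the document push it flush against that end, which still covers \([g]\) because \(|g|\le L_w\). Call this window \(w\). Then \(g\subseteq w\), so \(|g\cap w| = |g|\ge k|p|\), and the inequality \(k|p|\ge O_{\min}(k)=\tfrac{|p|}{2}\bigl(3k-\tfrac1k\bigr)\) is equivalent to \(k\le 1\), hence holds. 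This settles the intersection half of criterion~\eqref{eq:criteria}.

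It then remains to show \(w\in W_1\), i.e. \(d(w,p)\le k_{\mathrm{di}}\). Since \(d\) is a metric, \(d(w,p)\le d(w,g)+d(g,p)\). As \([g]\subseteq[w]\), the string \(\mathrm{str}(g)\) is a contiguous substring of \(\mathrm{str}(w)\), so their longest common subsequence has length \(|g|\) and \(d(w,g)=|w|-|g|\le L_w - k|p| = |p|(1-k^2)/k\). By Lemma~\ref{lm:p1k2}, \(d(g,p)\le (1-k^2)|p|\). Adding the two bounds gives \(d(w,p)\le |p|(1-k^2)\bigl(\tfrac1k+1\bigr)\), which is exactly \(k_{\mathrm{di}}\) as defined in~\eqref{eq:kdi}; hence \(w\) is added to \(W_1\) and the lemma follows.

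The argument is short because it is essentially a single application of the triangle inequality, so there is little room for a serious obstacle. The two places that need care are the existence of a length-\(L_w\) window containing \(g\) — a boundary-bookkeeping point, and the reason a document shorter than one window has to be excluded separately — and the verification that the two slacks, the length excess \(|w|-|g|\) and the edit distance \(d(g,p)\), sum to \emph{exactly} \(k_{\mathrm{di}}\); this exact cancellation is precisely what motivates the choice of threshold~\eqref{eq:kdi}. I expect the window-existence step to be the main thing to get fully rigorous, since everything else is a direct computation.
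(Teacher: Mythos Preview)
Your proposal is correct and follows essentially the same route as the paper: pick a window \(w\) (the paper calls it \(\mathit{fr}\)) of length \(L_w=|p|/k\) that contains \(g\), bound \(d(w,p)\) via the triangle inequality using \(d(w,g)\le|w|-|g|\le|p|/k-k|p|\) together with Lemma~\ref{lm:p1k2}, and observe that the sum is exactly \(k_{\mathrm{di}}\). You are in fact slightly more careful than the paper, since you explicitly check the overlap bound \(|g\cap w|\ge k|p|\ge O_{\min}(k)\) and address the boundary case for window existence, whereas the paper simply notes that full containment trivially implies criterion~\eqref{eq:criteria}.
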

\begin{proof}
As mentioned above, the triangle inequality is satisfied for longest common subsequence distance:
\(d(\mathit{fr}, p) \leq d(\mathit{fr}, g) + d(g, p)\).
According to lemma~\ref{lm:p1k2},
\(d(g, p) \leq |p|( 1 - k^2)\).
We also know that \(g \subseteq fr\). Therefore, because we can obtain \(g\)
from \(\mathit{fr}\) by removing all symbols that belong to \(fr\setminus g\), 
$d(\mathit{fr}, g) \leq |\mathit{fr}| - |g|$ holds true.
But because \(|\mathit{fr}| = \frac{|p|}{k}\)
and according to lemma~\ref{lm:k1k}, \(|g| \geq k |p|\),
the following also holds true:
\(|\mathit{fr}| - |g| \leq \frac{1}{k}|p| - k|p|\).
Therefore,
\(d(\mathit{fr}, p) \leq |p|\left( \frac{1}{k} - k + 1 - k^{2} \right) = |p|\left( 1 + \frac{1}{k} \right)\left( 1 - k^2\right)\).
It is obvious that during the scanning on phase 1 there will be a state in which the window contains\(\mathit{fr}\). Then, according to (\ref{eq:kdi}),
\(fr \in W_{1}\). Therefore, the following holds true:
\[\forall g \in G: \left( |\mathit{fr}| = \frac{|p|}{k}, g \subseteq \mathit{fr} \right) \Rightarrow \mathit{fr} \in W_1.\]
Since for any near duplicate there is an element of \(W_{1}\) that does not only intersect with this duplicate, but contains it completely, criterion~\ref{eq:criteria} is satisfied for \(W_{1}\) if we consider it as the set \(R\).
\end{proof}

\begin{lemma}
 For any \(p \in D\), \(k \in \left(1/\sqrt{3}, 1\right]\) and near duplicate group \(G\) of fragment \(p\) with similarity \(k\)~(def.~\ref{def:ndgp}) the criterion of completeness~\ref{eq:criteria} is satisfied in respect to the output of phase 2.\label{lm:ph2}\end{lemma}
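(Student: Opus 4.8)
The plan is to mimic the proof of Lemma~\ref{lm:ph1} — reduce to a single window and then argue geometrically, exploiting the hypothesis $k>1/\sqrt3$. Fix an arbitrary $g\in G$. As the proof of Lemma~\ref{lm:ph1} in fact establishes (the bound $d(\mathit{fr},p)\le k_{\mathrm{di}}$ there is deduced from $|\mathit{fr}|=|p|/k$ and $g\subseteq\mathit{fr}$ alone), \emph{every} fragment $\mathit{fr}$ of $D$ with $|\mathit{fr}|=|p|/k$ and $g\subseteq\mathit{fr}$ lies in $W_1$. We may therefore choose $\mathit{fr}\in W_1$ in which $g$ is \emph{centred}; its two ``margins'' (the parts of $\mathit{fr}$ before and after $g$) then have length at most $\tfrac12(|p|/k-|g|)$, and by Lemma~\ref{lm:k1k}, since $|g|\ge k|p|$, each margin has length at most $\tfrac{(1-k^2)|p|}{2k}$. (If $g$ is too near an end of $D$ to be centred, one takes the window flush against that end; see the last paragraph.)

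Now let $w\in W_2$ be the fragment obtained by ``shrinking'' this $\mathit{fr}$ in phase~2. Then $w\subseteq\mathit{fr}$, and $|w|\ge k|p|$ because every fragment examined while shrinking $\mathit{fr}$ — as well as $\mathit{fr}$ itself — has length in $[k|p|,|p|/k]$. Since $k>1/\sqrt3$ we have $\tfrac{(1-k^2)|p|}{2k}<k|p|\le|w|$, so $w$ is longer than either margin; as $\mathit{fr}\setminus g$ consists of the two margins, which are separated by $g$, the only way for the contiguous fragment $w$ to be disjoint from $g$ would be to lie inside one margin, which is impossible. Hence $w$ intersects $g$. If $g\subseteq w$, then $|g\cap w|=|g|\ge k|p|\ge O_{\min}(k)$ (the last inequality holds throughout $k\in(1/\sqrt3,1]$ since $3k-1/k\le 2k$ there), and \eqref{eq:criteria} is satisfied. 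Otherwise $w$ does not contain $g$, so the part of $w$ lying outside $g$ sits inside a single margin, whence $|w\setminus g|\le\tfrac{(1-k^2)|p|}{2k}$ and
\[
|g\cap w| = |w|-|w\setminus g| \ \ge\ k|p|-\frac{(1-k^2)|p|}{2k} \ =\ \frac{|p|}{2}\Bigl(3k-\frac1k\Bigr) \ =\ O_{\min}(k),
\]
exactly the bound required by \eqref{eq:criteria}. As $g\in G$ was arbitrary, the completeness criterion holds with $R$ taken to be $W_2$.

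The step I expect to be the main obstacle is the degenerate case set aside above: when $g$ lies within $\tfrac{(1-k^2)|p|}{2k}$ of an end of $D$, every length-$|p|/k$ window containing $g$ has one long margin, and the purely geometric estimate becomes too weak, since the part of $w$ outside $g$ could in principle protrude far into that long margin. To handle it one has to use that $w$ is the subfragment of $\mathit{fr}$ \emph{closest} to $p$, not merely a close one: since $|g|\in[k|p|,|p|/k]$ (Lemma~\ref{lm:k1k}), $g$ is itself one of the candidates examined while shrinking $\mathit{fr}$, so $d(w,p)\le d(g,p)\le(1-k^2)|p|$ by Lemma~\ref{lm:p1k2}; combining this edit-distance bound, the triangle inequality, and the optimality of $w$ among all candidate subfragments of $\mathit{fr}$ should bound the protrusion and again give $|g\cap w|\ge O_{\min}(k)$. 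Pinning down this boundary case, together with redoing the estimates for interval lengths rounded to integers (which we have been suppressing throughout), is where the remaining care is needed.
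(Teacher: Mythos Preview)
Your proposal follows the same approach as the paper's, and in fact goes well beyond it: the paper \emph{omits} the formal proof and gives only a two-sentence sketch --- take the worst case where shrinking reduces the length to $k|p|$, then check the corner cases of the shrunk element positioned in the centre or at the ends of $w_1$. Your argument makes this precise and adds a device the paper does not mention: among all length-$|p|/k$ windows containing $g$ (all of which lie in $W_1$, as you observe), pick the one in which $g$ is centred. This makes the generic case fall out by pure interval arithmetic, yielding exactly $O_{\min}(k)$. The document-boundary situation you flag as unfinished is precisely the ``ends of $w_1$'' corner case the paper alludes to but likewise does not work out; your instinct to invoke the optimality of $w$ there (via $d(w,p)\le d(g,p)\le(1-k^2)|p|$, since $g$ is itself one of the candidates scanned) is the natural lever, though converting that edit-distance bound into a positional-overlap bound needs more than the bare triangle inequality, so some additional argument is indeed required.
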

\begin{proof}
 We omit a formal proof due to its large size. The main idea here is considering the worst case where during ``shrinking'', the length of \(w_1 \in W_1\) elements is decreased to \(k|p|\). Considering corner cases (an element positioned in the center or right at the ends of \(w_1\)) allows us to confirm that the lemma holds true.
\end{proof}

\begin{lemma}
For any \(p \in D\), \(k \in \left(1/\sqrt{3}, 1\right]\) and near duplicate group \(G\) of fragment \(p\) with similarity \(k\) (def.~\ref{def:ndgp}) the criterion of completeness~\ref{eq:criteria} is satisfied in respect to the output of phase 3.\label{lm:ph3}\end{lemma}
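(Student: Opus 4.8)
The plan is to obtain completeness for phase~3 directly from Lemma~\ref{lm:ph2}, exploiting only the fact that neither operation performed in phase~3 can shrink an already large overlap below $O_{\min}(k)$. Fix an arbitrary document $D$, a pattern $p\in D$, a similarity measure $k\in(1/\sqrt{3},1]$ and a near duplicate group $G$ of $p$ with similarity $k$. By Lemma~\ref{lm:ph2}, for every $g\in G$ there is a fragment $w\in W_2$ with $|g\cap w|\ge O_{\min}(k)$. Since $g\cap w\subseteq g\cap w'$ whenever $w\subseteq w'$, it is enough to show that for each such $g$ the output $R=W_3$ contains some $w^*$ with $w\subseteq w^*$; then $|g\cap w^*|\ge|g\cap w|\ge O_{\min}(k)$ and criterion~\ref{eq:criteria} follows.

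First I would dispose of the call $W_3\leftarrow\mathrm{Unique}(W_2)$. This operation only discards repeated occurrences of one and the same fragment --- these arise from windows of $W_1$ that differ by a few symbols and shrink to the very same sub-fragment --- while keeping a representative with the same interval. Hence the fragment $w$ provided by Lemma~\ref{lm:ph2} is still present in $\mathrm{Unique}(W_2)$.

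The core of the argument is the containment-elimination loop. Here I would establish the claim that every $w\in\mathrm{Unique}(W_2)$ is contained in some $\subseteq$-maximal element $w^*$ of $\mathrm{Unique}(W_2)$, and that every such $w^*$ survives into $W_3$. The first half is immediate because intervals are finite: among the finitely many elements of $\mathrm{Unique}(W_2)$ containing $w$, take one of maximal length; were it strictly contained in a further element $w''$, then $w''$ would also contain $w$ and be strictly longer, a contradiction. For the second half, observe that at every moment of the loop the current $W_3$ is a subset of $\mathrm{Unique}(W_2)$, so the test ``$\exists w'_3\in W_3:\ w^*\subset w'_3$'' can never succeed for a $\subseteq$-maximal $w^*$; thus $w^*$ is never removed and lies in $R$. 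Applying this to the fragment $w$ obtained from Lemma~\ref{lm:ph2} yields the required $w^*\in R$ with $w\subseteq w^*$, which completes the argument.

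The one point that needs care --- and the step I expect to be the only real obstacle --- is the interaction between the removal test and the in-place shrinking of $W_3$ during the loop: one must be certain that an entire $\subseteq$-chain is never wiped out, i.e. that the maximal element of each chain genuinely remains, regardless of iteration order. Working with the fixed set $\mathrm{Unique}(W_2)\supseteq W_3$, as above, settles this, since $\subseteq$-maximality in that set is preserved under any sequence of such deletions. Everything else reduces to the monotonicity $|g\cap w|\le|g\cap w'|$ for $w\subseteq w'$, which requires no computation.
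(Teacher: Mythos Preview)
Your proposal is correct and follows the same idea as the paper's own proof, which simply observes that phase~3 only deletes elements whose intervals are contained in those of surviving elements, so completeness is inherited from $W_2$. You have merely made explicit the points the paper dismisses as obvious, in particular the survival of $\subseteq$-maximal elements under the in-place deletion loop.
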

\begin{proof}
Phase 3 consists of element deletion from \(W_2\) only. The intervals of the deleted elements are contained in intervals of other elements. It is obvious that if \(W_2\) satisfied the criterion, then \(W_3\) will satisfy it as well.
\end{proof}

\begin{theorem}
The criterion of completeness is satisfied for any \(p \in D\), \(k \in \left(1/\sqrt{3}, 1\right]\), corresponding algorithm output \(R\) and any near duplicate group \(G\) of fragment \(p\) with similarity \(k\).
\label{th:recall}\end{theorem}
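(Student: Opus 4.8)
The plan is to derive Theorem~\ref{th:recall} directly from the chain of phase-by-phase lemmas, with the theorem's own proof amounting to a single observation: the algorithm's output $R$ is exactly the set $W_3$ produced by phase~3, so the completeness criterion~(\ref{eq:criteria}), read with $R = W_3$, is nothing other than the assertion of Lemma~\ref{lm:ph3}. Thus, once Lemmas~\ref{lm:ph1}, \ref{lm:ph2}, and~\ref{lm:ph3} are established, the theorem follows at once. What remains is to check that the three lemmas genuinely compose: Lemma~\ref{lm:ph1} secures the criterion for the output of phase~1 (its ``input'' being the whole document), Lemma~\ref{lm:ph2} carries it from $W_1$ to $W_2$, and Lemma~\ref{lm:ph3} from $W_2$ to $W_3 = R$. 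I would therefore write the theorem's proof as one short paragraph citing Lemma~\ref{lm:ph3}, and devote the real effort to the lemmas themselves.

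For phase~1 the point is that every near duplicate $g$ of $p$ is not merely intersected but entirely contained in some window. Fixing $g \in G$, I would bound $d(\mathit{fr}, p) \le d(\mathit{fr}, g) + d(g, p)$ by the triangle inequality, use Lemma~\ref{lm:p1k2} for $d(g, p) \le (1 - k^2)|p|$, and use $g \subseteq \mathit{fr}$ together with $|\mathit{fr}| = |p|/k$ and Lemma~\ref{lm:k1k} ($|g| \ge k|p|$) for $d(\mathit{fr}, g) \le |\mathit{fr}| - |g| \le (1/k - k)|p|$; the two bounds sum to exactly the threshold $k_{\mathrm{di}}$ of~(\ref{eq:kdi}), so the window position that covers $g$ lies in $W_1$. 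Then $|g \cap \mathit{fr}| = |g| \ge k|p| \ge O_{\min}(k)$, the last inequality being $2k \ge 3k - 1/k$, i.e. $k \le 1$.

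Phase~2 is the crux, and I expect it to be the main obstacle — the authors themselves only sketch it. Shrinking replaces each $w_1 \in W_1$ by its longest subfragment of minimal edit distance to $p$, with candidate lengths ranging over $[k|p|,\, |p|/k]$. Since $g$ is itself a subfragment of $\mathit{fr}$ whose length lies in that range (Lemma~\ref{lm:k1k}), the chosen $w_2'$ satisfies $d(w_2', p) \le d(g, p)$. The goal is $|g \cap w_2'| \ge O_{\min}(k) = \tfrac{|p|}{2}(3k - 1/k)$. I would convert the distance bounds into an overlap bound via $|g \cap w_2'| \ge \tfrac{1}{2}(|g| + |w_2'| - d(g, w_2'))$, since the common contiguous piece of two fragments of $D$ is a common subsequence of both; but a crude triangle-inequality estimate of $d(g, w_2')$ turns out not to be tight enough, so the argument has to fall back on a geometric case analysis of where the length-$\ge k|p|$ fragment $w_2'$ can sit inside the length-$|p|/k$ window relative to $g$ — the extremal configurations being $w_2'$ or $g$ centered in, or flush with an end of, $\mathit{fr}$ — combined in each case with $d(w_2', p) \le d(g, p)$. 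Making this case analysis airtight is the hard part.

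Phase~3 is routine: it deletes from $W_2$ only those elements whose interval is contained in the interval of a surviving element, so a fragment $w$ witnessing~(\ref{eq:criteria}) for a given $g$ is either kept or replaced by a superset, and $|g \cap w|$ cannot decrease. Chaining Lemmas~\ref{lm:ph1}--\ref{lm:ph3} and using $R = W_3$ then yields the theorem.
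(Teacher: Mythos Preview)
Your proposal is correct and follows the paper exactly: the theorem's proof is the one-line observation that $R = W_3$ together with Lemmas~\ref{lm:ph1}--\ref{lm:ph3}, and your sketches of those lemmas (triangle inequality plus Lemmas~\ref{lm:k1k} and~\ref{lm:p1k2} for phase~1; worst-case geometric case analysis for phase~2, which the paper likewise only sketches; superset-replacement monotonicity for phase~3) coincide with the paper's own arguments. One harmless slip: in your abandoned phase-2 approach the inequality $|g \cap w_2'| \ge \tfrac{1}{2}(|g| + |w_2'| - d(g, w_2'))$ points the wrong way, since the right-hand side equals the LCS length, which \emph{upper}-bounds rather than lower-bounds the interval overlap --- but as you discard that route anyway, it does no damage.
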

\begin{proof}
The output of phases 1--3 was proven to satisfy the criterion~\ref{eq:criteria}  in lemmas~\ref{lm:ph1},~\ref{lm:ph2},~\ref{lm:ph3}.
\end{proof}

\subsection{Optimizing the algorithm}
The proposed algorithm turned out to be inadequate performance-wise: its run time exceeded one hour when searching for patterns larger than 100 symbols in documents of about 2 MB in size. Furthermore, the algorithm produced many false positives~--- its output contained the same text fragments that were insignificantly shifted relatively to each other. As the result, a range of optimizations has been suggested.

\textbf{Optimization 1} is applied during phase 1 (scanning). It allows to reduce the number of calculations of $d$, significantly improving the run time of the algorithm. It is based on the known Boyer-Moore algorithm, which is intended for matching a pattern in a string~\cite{Boyer1977}: during the scan, a check is performed to see by how many symbols the window can be shifted without skipping the required result. Therefore, at each step of the scan we check whether \(d\left( w,p \right) > k_{\text{di}} + 1\) (\(w\) is the window position) holds. If it is true, then we slide the window by \((d\left( w,p \right) - k_{\text{di}})/2\) symbols to the right. Otherwise, we slide it by one symbol.

\textbf{Optimization 2} is applied during phase 2 (``shrinking''). It allows to reduce the number of computations of $d$ as well. The approach is similar to the one used in the previous optimization. During ``shrinking'' of a text fragment \(w_{1}\), the window scans the fragment in a symbol-by-symbol manner. At each step \(d\left( p,\ w_{2}^{'} \right)\) is computed and, if necessary, its minimum value \(d_{\min}\) is updated. If for the current window position \(w_{2}^{'}\), \(d\left( p,\ w_{2}^{'} \right) > d_{\min} + 1\) holds true, slide the window to the right by \((d\left( p,\ w_{2}^{'} \right) - d_{\min})/2\) symbols. Otherwise, slide it by one symbol. The \(d_{\min}\) value is updated at the beginning of each iteration corresponding to the next value of the sliding window width.

\textbf{Optimization 3} is applied during phase 3 (filtering). It allows to minimize the cardinality of \(W_{3}\). It is as follows: the set is divided into maximum subsets that are transitively closed under intersection. Further, for every such subset a \(w_{3}\) fragment with the minimum value of \(d\left( w_{3},p \right)\) is selected, or if there are several such fragments, the one with maximum length. All remaining elements of the set are deleted.

\textbf{Optimization 4} is applied during phase 3, extending all text fragments of \(W_{3}\) up to complete words. The bounds of a text fragment can ignore the bounds of words, i.e. incomplete words can be included into text fragments. In order to address this, text fragments are expanded to include these words fully. This helps to decrease the number of false positives in the algorithm's output.

\textbf{Optimization 5} is applied during phases 1 and 2. Its purpose is reusing $d$ for the same strings and parallelizing the ``shrinking'' of the elements of \(W_1\).

Let us show how these optimizations affect the algorithm's completeness.

\begin{theorem}
Optimizations 1, 2, 4, 5 preserve the completeness property.\label{th:optrecall} \end{theorem}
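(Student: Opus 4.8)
The plan is to treat each optimization separately and show that none of them can remove a text fragment that the unoptimized algorithm would have needed in order to satisfy criterion~\ref{eq:criteria}. The key observation is that Theorem~\ref{th:recall} already guarantees completeness for the unoptimized algorithm, so it suffices to prove that the \emph{set of candidates surviving each phase} is, after optimization, still a superset (up to the required intersection bound $O_{\min}(k)$) of what is needed. I would phrase each step as: ``every $\mathit{fr}$ of length $|p|/k$ that contains a near duplicate $g$ is still reachable/retained.''

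First I would handle Optimization~1. The essential fact is the one already used in Lemma~\ref{lm:ph1}: for the window position $\mathit{fr}$ that fully contains $g$, we have $d(\mathit{fr},p)\le k_{\mathrm{di}}$. The Boyer--Moore-style skip slides the window by $(d(w,p)-k_{\mathrm{di}})/2$ only when $d(w,p)>k_{\mathrm{di}}+1$. I would show, using the triangle inequality and the fact that shifting a window by one symbol changes $d(\cdot,p)$ by at most $1$ (each single-symbol shift adds one symbol and deletes one, so the LCS distance moves by at most $2$; with the $1/2$ factor the skip is safe), that the jump can never step \emph{over} a position $w^\ast$ with $d(w^\ast,p)\le k_{\mathrm{di}}$: if the current distance is $d(w,p)$, then any position within $(d(w,p)-k_{\mathrm{di}})/2$ symbols still has distance $>k_{\mathrm{di}}$, hence is not a required candidate. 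Therefore $W_1$ is unchanged on exactly the fragments criterion~\ref{eq:criteria} relies on, and Lemma~\ref{lm:ph1} still applies.

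Next, Optimization~2 is the analogous argument inside phase~2: the shrinking scan uses the same skip rule with $d_{\min}$ in place of $k_{\mathrm{di}}$, and the monotone, Lipschitz-in-one-symbol behaviour of $d(\cdot,p)$ again guarantees that no skipped window position could have strictly beaten the current best. Since phase~2's contribution to completeness (Lemma~\ref{lm:ph2}) only depends on \emph{some} near-optimal inner fragment being selected, and the skip never discards a strictly-better one, the optimized phase~2 output still satisfies the criterion. Optimization~4 is the easiest: expanding each surviving $w_3$ to whole-word boundaries only \emph{enlarges} intervals, so $|g\cap w|$ can only grow, and criterion~\ref{eq:criteria} is preserved a fortiori. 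Optimization~5 changes nothing semantically — it memoizes $d$ and parallelizes independent iterations — so the output set is literally identical; I would state this in one sentence. Notably Optimization~3 is \emph{absent} from the list, since it replaces a transitively-closed cluster by a single representative and can shrink the retained intervals below $O_{\min}(k)$; I would remark that this is why the theorem excludes it.

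The main obstacle is the skip-bound argument for Optimizations~1 and~2: one must pin down precisely how much $d(w,p)$ can decrease when the window advances by a single symbol (and hence by $t$ symbols), and verify that $(d(w,p)-k_{\mathrm{di}})/2$ is a valid, non-overshooting skip length under the LCS distance specifically — the factor $1/2$ is exactly calibrated to the ``at most $2$ per symbol'' sensitivity of LCS distance, and getting that constant right, including the off-by-one in the ``$>k_{\mathrm{di}}+1$'' guard, is the delicate part. Everything else reduces to monotonicity of intersection size or to the observation that the optimization does not alter the candidate set at all.
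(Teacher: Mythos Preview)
Your proposal is correct and follows essentially the same approach as the paper: the paper's proof is a terse sketch that isolates the same key fact for Optimizations~1 and~2 (phrased as ``if $s_1=ab$, $s_2=bc$ with $|a|=|c|$ then $|a|+|c|\ge d(s_1,s_2)$'', i.e.\ a shift of $t$ symbols changes the LCS distance by at most $2t$), and disposes of Optimizations~4 and~5 exactly as you do (extension only enlarges intervals; memoization/parallelization is purely implementational). Your write-up is more detailed than the paper's, but the underlying argument is the same Lipschitz-plus-triangle-inequality justification of the $1/2$ skip factor.
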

\begin{proof}
Consider two strings that are results of concatenation: \(s_1=ab\) and \(s_2=bc\), where \(|a|=|c|\) and \(d(s_1, s_2) = d\). We can easily show that \(|a|+|c| \ge d\). Using this fact, it is easy to prove the completeness of optimization 1. The completeness of optimization 2 is proven in the same way. The completeness of optimization 4 can not be doubted because it only extends the elements of the output. Finally, optimization 5 is complete because it only considers the implementation of the algorithm.
\end{proof}

\begin{remark}
Situations where optimization 3 does not satisfy the criterion of completeness are possible, but our experiments show that their number is insignificant in practice.
\end{remark}

\section{Algorithm complexity}
Document length $|D|$, pattern length $|p|$, the $k$ value, and the cardinality of the near duplicate group of the pattern $|G_p|$ are all significant parameters that influence the run time of the algorithm. Let us estimate the algorithm's complexity depending on these parameters.

The average complexity of calculating $d$ (i.e. edit distance) is \(\mathcal{O}(|p|^2)\)~\cite{Ratcliff1988}. Consequently, the average complexity of phase 1 is proportional to $|p|^2$~and $|D|$. During phase 2 all of the internal fragments of each $w_1 \in W_1$ are iterated over, and it is easy to show that their number is proportional to $|p|$. Furthermore, the cardinality of the set $W_1$ is proportional to $|G_p|$ and $k_{di}$. Finally, the complexity of phase 2 is $\mathcal{O}(|G_p|)$ and $\mathcal{O}(|p|^4)$. The complexity of phase 3 operations is $\mathcal{O}(|W_2|*\log|W_2|)$, but because $|W_2| = |W_1|$, the complexity of phase 3 is $\mathcal{O}(|G_p|*\log |G_p|)$ and $\mathcal{O}(|p|*\log|p|)$. Optimizations 1 and 2 on average lead to ``skips'' during iteration, the size of which is proportional to $k_{di}$ (and hence $|p|$), making the complexity of phases 1 and 2 $\mathcal{O}(|p|)$ and $\mathcal{O}(|p|^3)$ respectively.
Therefore, with $k = \mathit{const}$ the algorithm's run time can be estimated as $\mathcal{O}(|D|)$, $\mathcal{O}(|p|^3$)~and $\mathcal{O}(|G_p|*\log |G_p|)$.

\begin{theorem}
The complexity of the algorithm with fixed $D$ and $p$ is estimated as $\mathcal{O}(1/k^4)$ on average.
\label{th:complexity_k}\end{theorem}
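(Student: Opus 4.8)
The plan is to fix $D$ and $p$, so that $|D|$ and $|p|$ become constants, and then to rerun the phase-by-phase count of the previous paragraph while keeping $k$ symbolic. Since $k$ ranges over the bounded interval $(1/\sqrt3,1]$, every power of $1/k$ is in fact $\Theta(1)$ there, so the statement is to be read as an identification of the exponent of $1/k$ in the symbolic cost expression. The quantities that still depend on $k$ are: the window length $L_w=|p|/k$; the scanning threshold $k_{\mathrm{di}}$ from~(\ref{eq:kdi}), whose leading term is of order $|p|/k$; the range $[k|p|,|p|/k]$ of widths scanned during ``shrinking'', of length $|p|(1-k^2)/k$; and the per-call cost of $d$, which is $\mathcal O(|s_1||s_2|)$ on average and hence of order $|p|\cdot|p|/k$ whenever one argument is a window or an internal fragment.

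First I would dispose of phase~1. There are $\mathcal O(|D|)$ window positions (a constant once $D$ is fixed, and the skips of Optimization~1 only decrease this), and each costs one evaluation of $d$ on strings of lengths $|p|$ and $|p|/k$, i.e.\ $\mathcal O(|p|^2/k)=\mathcal O(1/k)$ on average; so phase~1 is $\mathcal O(1/k)$. Along the way I would bound $|W_1|$: by Lemmas~\ref{lm:k1k} and~\ref{lm:p1k2} a window can lie in $W_1$ only if it nearly coincides with a genuine near duplicate of $p$; each such near duplicate is covered by only $\mathcal O(L_w)=\mathcal O(k_{\mathrm{di}})=\mathcal O(1/k)$ admissible window positions; and, since near duplicates of $p$ are pairwise disjoint and of length at least $k|p|\ge|p|/\sqrt3$, there are at most $\mathcal O(|D|/|p|)=\mathcal O(1)$ of them. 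Hence $|W_1|=\mathcal O(1/k)$, and likewise $|W_2|=|W_1|=\mathcal O(1/k)$.

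Next comes phase~2, which I expect to dominate. For each of the $\mathcal O(1/k)$ elements $w_1\in W_1$ the shrinking loop ranges over widths $l$ from $k|p|$ to $|p|/k$ and, for each width, over the positions of a width-$l$ subfragment inside $w_1$; summing the arithmetic progression shows the number of (width, position) pairs per $w_1$ is $\Theta\bigl((|p|(1-k^2)/k)^2\bigr)=\mathcal O(1/k^2)$, and each pair triggers one $\mathrm{Compare}$ call, i.e.\ evaluations of $d$ against $p$ of average cost $\mathcal O(|p|^2/k)=\mathcal O(1/k)$. Multiplying the three factors gives $\mathcal O(1/k)\cdot\mathcal O(1/k^2)\cdot\mathcal O(1/k)=\mathcal O(1/k^4)$; Optimizations~2 and~5 only shrink the middle factor, so $\mathcal O(1/k^4)$ stays a valid upper bound in every case. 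Phase~3 sorts and deduplicates $W_2$ at cost $\mathcal O\bigl(|W_2|\log|W_2|\bigr)=\mathcal O\bigl((1/k)\log(1/k)\bigr)$, dominated by phase~2. Adding the three phases yields the claimed $\mathcal O(1/k^4)$, the ``on average'' qualifier being inherited from the average-case cost of computing $d$.

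The main obstacle is the bound $|W_1|=\mathcal O(1/k)$: this is the one place where $|W_1|$ must be controlled as a function of $k$ rather than treated as raw data, and it rests on the geometric clustering argument above (each admissible window is pinned near an actual near duplicate, and at most $\mathcal O(1/k)$ windows cluster around each of the $\mathcal O(1)$ near duplicates because of the slack $k_{\mathrm{di}}$). A secondary but necessary check is that replacing the non-integer quantities $|p|/k$, $k|p|$ and $k_{\mathrm{di}}$ by their rounded values changes none of the exponents, and that the averaged ``skips proportional to $k_{\mathrm{di}}$'' behaviour assumed for Optimizations~1 and~2 can only lower, never raise, the exponent of $1/k$.
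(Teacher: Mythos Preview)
The paper omits its own proof of this theorem (``We omit the proof due to its large volume''), so there is nothing to compare directly. Your approach---fix $|D|$ and $|p|$, keep $k$ symbolic, and rerun the phase-by-phase accounting from the preceding paragraph while tracking only the powers of $1/k$---is exactly the natural continuation of the paper's own complexity discussion, and your acknowledgement that on the bounded interval $(1/\sqrt3,1]$ the statement must be read as identifying the exponent of $1/k$ in the symbolic cost is a necessary and honest caveat.

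Two remarks. First, your decisive step is the bound $|W_1|=\mathcal O(1/k)$, and you correctly flag it as the main obstacle; but your justification that ``by Lemmas~\ref{lm:k1k} and~\ref{lm:p1k2} a window can lie in $W_1$ only if it nearly coincides with a genuine near duplicate of $p$'' overstates what those lemmas give. Lemmas~\ref{lm:k1k} and~\ref{lm:p1k2} bound $|g|$ and $d(g,p)$ for an \emph{actual} near duplicate $g$; they say nothing about the converse, namely that every window $w$ with $d(w,p)\le k_{\mathrm{di}}$ must sit near such a $g$. The paper's own informal claim that ``the cardinality of the set $W_1$ is proportional to $|G_p|$ and $k_{\mathrm{di}}$'' makes the same heuristic leap, so you are in good company, but a full proof would need an independent argument that $d(w,p)\le k_{\mathrm{di}}$ forces $w$ to overlap a near duplicate (or else treat $|W_1|$ as an input parameter).

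Second, your count of $\Theta\bigl((|p|(1-k^2)/k)^2\bigr)$ (width, position) pairs per element of $W_1$ is more careful than the paper's assertion that ``their number is proportional to $|p|$''; your quadratic count is the correct one for the unoptimized inner loop, and it is what actually drives the $1/k^2$ factor you need. This does not affect the $1/k$ exponent, but it is worth noting that your derivation is sharper here than the paper's sketch.
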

We omit the proof due to its large volume. 

\section{Evaluation}
Theoretical complexity estimates are not sufficient for determining the real run time of the proposed algorithm. These estimates were produced using certain significant parameters of the algorithm independently to simplify the proofs, while real complexity can depend on their combinations. Another argument for the necessity of experimental evaluation is the fact that theoretical estimates do not provide the real value intervals of these parameters. Finally, other properties of the algorithm need to be evaluated as well.

 We have conducted our experiments to answer the following questions: 
 \begin{enumerate}
 	\item[(i)] what is the run time of the pattern matching algorithm on real data;
 	\item[(ii)] how large are algorithm's outputs having real data as input.
 \end{enumerate}
   The first question is important because the algorithm is used in interactive mode, and therefore its run time should not exceed several minutes. Considering output volume, we have proven that our algorithm's output contains all existing near duplicates of a certain pattern. It is, however, unclear, how exactly large are the real outputs of the algorithm ~--- outputs that contain over 100 elements become more or less unfeasible for human analysis. In turn, output volume is affected by the number of false positive matches and the number of near duplicates in the document.

 We have experimented on 19 industrial documents both in Russian and English (described in reference~\cite{Luciv2018en.2}). The experiments were conducted on a computer with the following specifications:  Intel Core i7 2600, 3.4~GHz, 16~GB RAM. The documents were converted into ``flat text'' (UTF-8 encoding) with Pandoc~\cite{Pandoc}. After the convertation, the size of the documents ranged from 0.04 MB to 2.5 MB (0.75 MB on average). We are inclined to think that these numbers are realistic for $|D|$. However, we should note that it is necessary to create a more representative selection of different documentation types in order to obtain more precise estimates.

The experiments were conducted as follows. We have run the algorithm for patterns of length ranging from 50 to 1000 symbols with a 50-symbol step. A 1000-symbol fragment is about 25\% of a page of a docx document, i.e. it is a large fragment, and following our experiments, duplicates are significantly smaller in general. We have iterated the similarity measure value k from 0.6 to 1 with 0.1 step for each selected pattern in each document. We have selected the pattern in the following way. Having a fixed pattern length, we followed our process and selected the ``warmest'' area in the document of this length, calculating it automatically as a fragment where the following expression reaches its maximum value:
$\sum_{t \in fr} h(t)$. In this expression $t$ is a token of fragment $fr$ and $h(t)$ is its temperature. The sum is calculated over all tokens of the fragment, including possibly incomplete leftmost and rightmost tokens.

Analyzing the data obtained from the experiments to answer the question whether the algorithm's run time is suitable for interactivity, we have established the following: in 38\% of cases the algorithm ran for less than 5 seconds, in 78\% cases~--- less than 30 seconds, in 90\% of cases~--- less than 2 minutes. These run times are fairly adequate for interactive mode.

We have obtained the following data on the output volumes of our algorithm: 84\% of outputs contained less than 100 elements, 5\% outputs --- from 100 to 200 elements, 5.6\% --- from 200 to 600 elements, 5.4\% --- from 600 to 1000 elements. Thus, the majority of near duplicate groups in software documentation are relatively small (containing up to 100 elements), which follows from Theorem 1 and our experimental results.

\section{Conclusion}
In this study we have presented an interactive near duplicate search process for software documentation. This process solves the problem of meaningful extraction of near duplicates by involving the user, who can use an automatically generated heat map of exact duplicates to detect the most probable occurrences of near duplicates. We have created a pattern-based near duplicate search algorithm and provided optimizations for it. We have proven the completeness of the algorithm, meaning that all near duplicates contained in the document are present in the algorithm's output. More precisely, duplicates located in the document significantly intersect with particular elements of the output, and this is why the user can identify them with ease. Our process allows user to manually edit their bounds and to include them in the output in full. We present complexity estimates for our algorithm as well as experimental results. These results suggest that duplicate groups in software documentation generally do not exceed 100 elements, and the algorithm itself performs adequately for practical use.

In the future, we plan to study different types of software documentation in detail using our algorithm and experiment model (focusing on API documentation first). We also intend to thoroughly examine the behavior of our algorithm with varying input parameters (pattern length and similarity measure). Finally, we plan to switch to automatic methods of detecting meaningful duplicates via machine learning. A detailed analysis of different types of near duplicates in different software documentation types is required as well. Other fruitful areas for future work are integration of documentation reuse (in the context of requirement development) with automatic test development~\cite{Drobintsev2014,Pakulin2011}, and visualization of duplicate structure using diagrams~\cite{Gavrilova2011}.
\bibliographystyle{utf8gost71upatched}
\bibliography{references,webresources}
\label{lastpage}

\end{document}